\newtheorem{proposition}{Proposition}
\DeclareMathOperator*{\argmax}{arg\,max}
\pgfplotsset{compat=newest}
\def\axisdefaultheight{110pt}
\def\BibTeX{{\rm B\kern-.05em{\sc i\kern-.025em b}\kern-.08em
    T\kern-.1667em\lower.7ex\hbox{E}\kern-.125emX}}
\begin{document}
\title{Learning-Based MPC for Fuel Efficient Control of Autonomous Vehicles with Discrete Gear Selection}
\author{Samuel Mallick, Gianpietro Battocletti, Qizhang Dong, Azita Dabiri, Bart De Schutter
\thanks{
	This project has received funding from the European Research Council (ERC) under the European Union's Horizon 2020 research and innovation programme (Grant agreement No. 101018826 - ERC Advanced Grant CLariNet).}
\thanks{
	The authors are with the Delft Center for Systems and Control, Delft University of Technology,
	Delft, The Netherlands,
	{\tt\small \{s.h.mallick, g.battocletti, q.dong, a.dabiri, b.deschutter\}@tudelft.nl}}%
}

\maketitle

\begin{abstract}
Co-optimization of both vehicle speed and gear position via model predictive control (MPC) has been shown to offer benefits for fuel-efficient autonomous driving.
However, optimizing both the vehicle's continuous dynamics and discrete gear positions may be too computationally intensive for a real-time implementation.
This work proposes a learning-based MPC scheme to address this issue.
A policy is trained to select and fix the gear positions across the prediction horizon of the MPC controller, leaving a significantly simpler continuous optimization problem to be solved online.
In simulation, the proposed approach is shown to have a significantly lower computation burden and a comparable performance, with respect to pure MPC-based co-optimization.
\end{abstract}

\begin{IEEEkeywords}
Model predictive control (MPC), autonomous vehicles, learning.
\end{IEEEkeywords}

\section{INTRODUCTION}
\IEEEPARstart{F}or optimal control of autonomous vehicles (AVs), model predictive control (MPC) is a powerful and prevalent method \cite{mallick2024comparison,zhengDistributedModelPredictive2017}.
In this context, co-optimization of an AV's speed and gear-shift schedule is a promising approach to achieve high-performing and fuel-efficient autonomous driving; however, considering the gear-shift schedule requires the online solution of a mixed-integer nonlinear program (MINLP) \cite{bemporadControlSystemsIntegrating1999}, for which the computational burden can be intensive.

To address this issue the MINLP can be made easier to solve by relaxing the problem or by finding heuristic numerical solutions \cite{shaoVehicleSpeedGear2021, ganesanNumericalStrategiesMixedInteger2023}.
However, the relaxed problem can still be difficult to solve, and approximate solutions can be suboptimal.
Alternatively, a decoupled approach can be used.
In this case, first the speed is optimized, and next a gear-shift schedule is selected for the given speed using, e.g., a learning-based gear controller \cite{yinHierarchicalModelPredictive2022}, or dynamic programming \cite{turriGearManagementFuelEfficient2016}.
However, decoupling speed control and gear control is suboptimal compared to the joint optimal control of both together.
Finally, the computational burden can be alleviated by using a learning-based controller that controls both speed and gear-shift schedule in place of an MPC controller \cite{liEcologicalAdaptiveCruise2020}; however, in contrast to an MPC controller, a learning-based controller is not able to guarantee constraint satisfaction.

In light of the above issues, this work presents a novel learning-based MPC controller for the co-optimization of speed and gear-shift schedule for an AV.
Taking inspiration from \cite{da2024integrating}, a learned policy selects and fixes the gear positions across the prediction horizon of the MPC controller, such that optimal control and constraint satisfaction are handled by a nonlinear program (NLP), rather than an MINLP. 
A neural network (NN)-based policy is proposed where, to address the exponential growth of the policy's action space with the prediction horizon, a recurrent architecture is used.
The policy learns to select gears that are optimal for the original optimization problem, rather than decoupling the gear-shift schedule from the speed control, such that, in contrast to \cite{yinHierarchicalModelPredictive2022, turriGearManagementFuelEfficient2016}, the notion of co-optimization is retained. 
In this way, the MPC controller is able to consider the gear and powertrain dynamics without optimizing explicitly over discrete inputs, in contrast to \cite{shaoVehicleSpeedGear2021, ganesanNumericalStrategiesMixedInteger2023}.
Furthermore, unlike pure learning-based controllers \cite{liEcologicalAdaptiveCruise2020}, the use of the MPC controller gives constraint satisfaction.
To this end, we propose a backup gear-shift schedule that can guarantee feasibility of the MPC optimization problem.
Finally, due to the recurrent architecture, the policy, once trained for a specific prediction horizon, generalizes over prediction horizons without the need for retraining.

\section{PROBLEM SETTING}\label{sec:problem_setting}
Consider the vehicle and powertrain models \cite{shaoVehicleSpeedGear2021}
\begin{equation}\label{eq:dynamics}
	\begin{aligned}
		T(t)n(t) &= g(t) + C v^2(t) + m a(t) + F_\text{b}(t), \\
		\omega(t) &= (30/\pi) \cdot n(t) v(t) ,
	\end{aligned}
\end{equation} 
with $t$ continuous time, $a$ the acceleration, $v$ the velocity, and $m$ the mass of the vehicle.
Furthermore, $C$ is the wind drag coefficient, $F_\text{b}$ is the brake force, $T$ is the engine torque, and $\omega$ is the engine speed.
The friction function 
\begin{equation}
	g(t) = \mu m g \cos\big(\alpha(t)\big) + m g \sin\big(\alpha(t)\big), 
\end{equation}
with $\mu$ the rolling friction constant and $g$ the gravitational acceleration, defines the road friction for road angle $\alpha$, which, for simplicity of presentation, is assumed to be constant, i.e., $\alpha(t) = \alpha$ and $g(t) = G$.
The lumped gear ratio $n(t) = z\big(j(t)\big)z_\text{f}/r$ is determined by the final drive ratio $z_\text{f}$, the wheel radius $r$, and the transmission gear ratio $z$, a discrete variable selected by the gear position $j \in \{1,\dots,6\}$.
To model engine dynamics the torque rate of change is constrained as $|\dot{T}(t)| \leq \Delta T_\text{max}.$
Furthermore, consider the fuel model \cite{shaoVehicleSpeedGear2021}
\begin{equation}\label{eq:fuel}
	\dot{m}_{\text{f}}(t) = c_0 + c_1 \omega(t) + c_2 \omega(t) T(t),
\end{equation}
with $c_{0}, c_1,$ and $c_2$ constants.
The variables $F_\text{b},T,$ and $\omega$ are physically bounded above and below, e.g., $T_\text{min} \leq T \leq T_\text{max}$.
Note that the bounds on $\omega$ implicitly bound $v$ between
\begin{equation}
	v_\text{min} = \frac{\pi \cdot \omega_\text{min} r}{30 \cdot z(1) z_\text{f}} \quad \text{and}\quad v_\text{max} = \frac{\pi \cdot \omega_\text{max} r}{30 \cdot z(6) z_\text{f}}.
\end{equation}
For convenience, in the following, we define a function that maps a speed and gear choice to the engine speed
\begin{equation}
	\omega(v, j) = \frac{30 \cdot v \cdot z(j) z_\text{f}}{r \pi}.
\end{equation}

We consider the control of an AV to track a reference trajectory in a fuel efficient manner.
Denote the vehicle position, reference position, and reference velocity at time $t$ as $p(t)$, $p_\text{ref}(t)$, and $v_\text{ref}(t)$, respectively.
The performance metric is
\begin{equation}\label{eq:metric}
	\begin{aligned}
		P &= \sum_{k = 0}^{K_\text{sim}} \beta L_\text{t}\Big(\big[p(k \Delta t), v(k \Delta t)\big]^\top, \big[p_\text{ref}(k \Delta t), v_\text{ref}(k \Delta t)\big]^\top\Big) \\ &\quad + L_\text{f}\big(v(k \Delta t), T(k \Delta t), j(k \Delta t)\big),
	\end{aligned}
\end{equation}
where $\beta > 0$ expresses the importance of tracking against fuel efficiency, and $k$ is a discrete-time counter for time steps of $\Delta t$ seconds. 
The tracking cost
\begin{equation}
	L_\text{t}(x, y) = (x - y)^\top Q (x-y)
\end{equation}
quadratically penalizes deviations from the reference trajectory, with $Q \in \mathbb{R}^{2 \times 2}$ a positive-definite weighting matrix.
The fuel cost $L_\text{f}(v, T, j) = \Delta t \big(c_0 + c_1 \omega(v, j) + c_2 \omega(v, j) T\big)$ penalizes the fuel consumption over a time step.

\section{LEARNING-BASED MPC}\label{sec:controller}
In this section we introduce the proposed controller for the task.
Defining the state, reference state, and control input as 
\begin{equation}
		\begin{aligned}
			x(k) &= [p(k \Delta t), v(k \Delta t)]^\top, \\
			x_\text{ref}(k) &= [p_\text{ref}(k \Delta t),  v_\text{ref}(k \Delta t)]^\top, \\
			u(k) &= [T(k \Delta t), F_\text{b}(k \Delta t), j(k \Delta t)]^\top,
		\end{aligned}
\end{equation}
\eqref{eq:dynamics} can be approximated with the discrete-time dynamics $x(k+1) = f\big(x(k), u(k)\big)$, with
\begin{equation}\label{eq:disc_dynam}
	f(x, u) = \begin{bmatrix}
		x_1 +  x_2 \Delta t\\
		x_2 +  \frac{\Delta t}{m}\Big(\frac{u_1 z(u_3) z_\text{f}}{r} - C x_2^2 - u_2 - G\big)
		\Big)	\end{bmatrix},
\end{equation}
where subscripts select an element of a vector, e.g., $x_i$ is the $i$th element of the vector $x$.

\subsection{Mixed-integer nonlinear MPC}
Consider an MPC scheme with prediction horizon $N > 1$ defined by the following MINLP:
\begin{subequations}\label{eq:MINLP}
	\begin{align}
		J\big(x(k), &\textbf{x}_\text{ref}(k)\big) = \min_{\textbf{x}(k), \textbf{u}(k)}  \sum_{i = 0}^{N} \beta L_\text{t}\big(x(i|k), x_\text{ref}(i+k)\big) \nonumber \\
		&\quad \quad+ \sum_{i=0}^{N-1} L_\text{f}\big(x_2(i|k), u_1(i|k), u_3(i|k)\big) \\
		\text{s.t.} \quad &x(0|k) = x(k) \label{eq:IC}\\
		&\text{for} \quad i = 0,\dots,N-1: \nonumber \\
		\vspace{0.5cm}
		&\quad x(i+1|k) = f\big(x(i|k), u(i|k)\big) \label{eq:dynam_cnstr} \\
		&\quad |x_2(i+1|k) - x_2(i|k)| \leq a_\text{max}\Delta t	\label{eq:acc_lim}\\ 
		&\text{for} \quad i = 0,\dots,N-2: \nonumber \\
		&\quad |u_1(i+1|k) - u_1(i|k)| \leq \Delta T_\text{max} \Delta t \label{eq:T_lim} \\
		&\quad |u_3(i+1|k) - u_3(i|k)| \leq 1  \label{eq:gear_shift_lim} \\
		&\big(\textbf{x}(k), \textbf{u}(k)\big) \in \mathcal{C} \label{eq:state_action_set_cnstr}
	\end{align}
\end{subequations}
where $x(i|k)$ and $u(i|k)$ are the predicted states and inputs, respectively, $i$ steps into the prediction horizon of the MPC controller at time step $k$.
Furthermore, bold variables gather a variable over the prediction horizon, e.g., 
\begin{equation}
	\begin{aligned}
		\textbf{x}(k) &= \big[x^\top(0|k),\dots,x^\top(N|k)\big]^\top, \\
		\textbf{x}_\text{ref}(k) &= \big[x_\text{ref}^\top(k),\dots,x_\text{ref}^\top(k+N)\big]^\top.
	\end{aligned}
\end{equation}
If no solution exists for \eqref{eq:MINLP} $J\big(x(k), \textbf{x}_\text{ref}(k)\big) = \infty$ by convention.
The constraint \eqref{eq:T_lim} enforces the engine torque dynamic behavior, \eqref{eq:gear_shift_lim} prevents skipping gears when shifting, and \eqref{eq:acc_lim} limits acceleration to $a_\text{max}$.
The bounds on engine torque, engine speed, and brake force are grouped in
\begin{equation}
	\begin{aligned}
		\mathcal{C} = \bigg\{(\textbf{x}, \textbf{u}) \Big| &T_\text{min} \leq u_1(i|k) \leq T_\text{max}, \\
		&F_\text{b, min} \leq u_2(i|k) \leq F_\text{b, max}, \\
		&w_\text{min} \leq \omega\big(x_2(i|k), u_3(i|k)\big) \leq w_\text{max}, \\
		&w_\text{min} \leq \omega\big(x_2(i+1|k), u_3(i|k)\big) \leq w_\text{max}, \\
		&i = 0,\dots,N-1\bigg\}.
	\end{aligned}
\end{equation}
The last condition in $\mathcal{C}$, relating $x_2(i+1|k)$ to $u_3(i|k)$, ensures that the gear at time step $k+i$ maintains the engine speed within its bounds for all $t \in \big[(k+i) \Delta t, (k+i+1)\Delta t\big]$.

The MINLP \eqref{eq:MINLP} provides a state feedback controller via solving \eqref{eq:MINLP} at each time step $k$ and applying the first element $u^\ast(0|k)$ of the optimal sequence $\textbf{u}^\ast(k)$ to the system.
However, the computation required to solve \eqref{eq:MINLP} online renders it unsuitable for a real-time implementation, \begin{color}{black}i.e., the computation time is larger than the available  real time for the MPC to make a decision.\end{color} % $\Delta t$.
In the following, we introduce an alternative MPC controller that can be executed efficiently online.

\subsection{Learning-based nonlinear MPC}
Let us define a reduced control action that does not include the gear choice as $u^\prime(k) = \big[T(k \Delta t), F_\text{b}(k \Delta t)\big]^\top$.
We then introduce the following MPC controller, parameterized by a gear-shift schedule $\textbf{j}(k) = \big[j(0|k),\dots,j(N-1|k)\big]^\top$:
\begin{subequations}\label{eq:NLP}
	\begin{align}
		J&\big(x(k), \textbf{x}_\text{ref}(k), \textbf{j}(k)\big) = \min_{\substack{\textbf{x}(k),\\ \textbf{u}^\prime(k)}}  \sum_{i = 0}^{N} \beta L_\text{t}\big(x(i|k), x_\text{ref}(i + k)\big) \nonumber \\
		&\quad + \sum_{i=0}^{N-1} L_\text{f}\big(x_2(i|k), u_1^{\prime}(i|k), j(i|k)\big) \\
		\text{s.t.} \quad &\eqref{eq:IC}, \eqref{eq:acc_lim}, \eqref{eq:T_lim} \label{eq:nlp_rate_cnstr}\\
		\vspace{0.1cm}
		&\big(\textbf{x}(k), [u^{\prime, \top}(0|k), j(0|k),\dots, \nonumber \\
		&\quad\quad u^{\prime, \top}(N-1|k), j(N-1|k)]^\top\big) \in \mathcal{C} \label{eq:nlp_state_input_cnstr}\\
		&\text{for }i = 0,\dots,N-1 \nonumber \\ 
		&\quad x(i+1|k) = f\big(x(i|k), [u^{\prime, \top}(i|k), j(i|k)]^\top\big) \label{eq:nlp_dynam_cnstr} \\
		&\textcolor{black}{\text{for }i = 0,\dots,N-2:} \nonumber \\ 
		&\quad \textcolor{black}{|j(i+1|k) - j(i|k)| \leq 1}  \label{eq:nlp_gear_shift_lim}.
	\end{align}
\end{subequations}
With $\textbf{j}$ prespecified, no discrete variables are optimized in the problem \eqref{eq:NLP}, which can now be solved efficiently using numerical nonlinear solvers. 
Note that if $\textbf{j} = \textbf{u}_3^\ast$, the optimal gear-shift schedule from \eqref{eq:MINLP}, then $J(x, \textbf{x}_\text{ref}, \textbf{j}) = J(x, \textbf{x}_\text{ref})$.

We propose the use of a learned policy that selects and fixes the gears over the prediction horizon based on the optimal solution to the MPC problem from the previous time step.
Define the shifted solutions to \eqref{eq:NLP} at time step $k$, i.e., the optimal control and state trajectories from time step $k-1$ advanced by one time step, as 
	\begin{equation*}
		\begin{aligned}
			&\bar{\textbf{x}}(k) = \big[x^\top(k), x^{\ast, \top}(2|k-1), \scriptsize{\dots}, x^{\ast, \top}(N|k-1)\big]^\top \hspace{-0.15cm} \in \mathbb{R}^{2N}, \\
			&\bar{\textbf{u}}^\prime(k) = \big[u^{\prime, \ast, \top}(1|k-1),\dots, \\
			&\quad u^{\prime, \ast, \top}(N-1|k-1), u^{\prime, \ast, \top}(N-1|k-1)\big]^\top \in \mathbb{R}^{2N}.
		\end{aligned}
	\end{equation*}
	Note that the first element of $\bar{\textbf{x}}(k)$ is replaced with the state $x(k)$, such that in the case of modeling errors the real state is present.
Furthermore, define the shifted gear-shift schedule
\begin{equation}
	\bar{\textbf{j}}(k) = \big[j(1|k-1), \dots, j(N-1|k-1), j(N-1|k-1)\big]^\top.
\end{equation}
Consider the selection of $\textbf{j}$ by a policy, parameterized by $\theta$,
\begin{equation}
	\textbf{j} = \pi_\theta(\bar{\textbf{x}}, \bar{\textbf{u}}^\prime, \textbf{x}_\text{ref}, \bar{\textbf{j}}), 
\end{equation}
as a function of the reference trajectory and the shifted solutions from the previous time step\footnote{If $\alpha$ would be time-varying then $\alpha$ would be an additional input argument.}.
Note that for simplicity the time index $(k)$ is dropped.
In Section \ref{sec:learning} the architecture and training of $\pi_\theta$ are described.

Observe that there are many choices for $\textbf{j}$ for which \eqref{eq:NLP} has no solution.
While we can expect $\pi_\theta$ to almost always provide at least a feasible $\textbf{j}$ (if not optimal), here we propose a \emph{backup solution} that will be useful for guaranteeing feasibility at deployment.
\begin{color}{black}We then prove that this backup solution, while potentially suboptimal, always provides a feasible $\textbf{j}$.\end{color}
Define the set of feasible gears for a given velocity $v$ as:
\begin{equation}
	\Phi(v) = \Big\{j \in \{1,\dots,6\} \Big| w_\text{min} \leq \omega(v, j) \leq w_\text{max} \Big\},
\end{equation} 
\begin{color}{black}and $\phi$ as any mapping from $v$ to one of the gears $j \in \Phi(v)$, e.g., $\phi(x_2) = \max_{j^\prime \in \Phi(x_2)}j^\prime$.\end{color}
\begin{color}{black}The backup solution, defined by the function $\sigma$, is then
	\begin{equation}\label{eq:backup}
		\textbf{j} = \sigma(x_2) = \big[\phi(x_2), \dots, \phi(x_2)\big]^\top.
\end{equation}\end{color}
Furthermore, define a map from gears to velocities that satisfy the engine speed constraints as $\Omega(j) = \{v | j \in \Phi(v)\}$.
\begin{proposition}\label{prop:1}
	Assume that, for $j \in \{1,\dots,6\}$ and for all $x_2 \in \Omega(j)$, there exist $u_1$ and $u_2$ such that $T_\text{min} \leq u_1 \leq T_\text{max}$, $ F_\text{b, min} \leq u_2 \leq F_\text{b, max}$, and
	\begin{equation}\label{eq:prop1_assumption}
		\begin{aligned}
			\frac{u_1 z(j) z_\text{f}}{r} - C x_2^2 - u_2 - G = 0.
		\end{aligned}
	\end{equation}
	Then, for a state $x(k)$ such that $v_\mathrm{min} \leq x_2(k) \leq v_\mathrm{max}$, and a gear-shift sequence $\textbf{j}(k) = \sigma\big(x_2(k)\big)$, problem \eqref{eq:NLP} has a solution, i.e., $J\big(x(k), \textbf{x}_\mathrm{ref}(k), \textbf{j}(k)\big) < \infty$.
\end{proposition}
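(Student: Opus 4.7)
The plan is to exhibit an explicit feasible candidate for \eqref{eq:NLP} under the given hypotheses, thereby establishing $J\big(x(k),\textbf{x}_\mathrm{ref}(k),\textbf{j}(k)\big)<\infty$. The guiding idea is that the constant backup schedule $\sigma(x_2(k))$, paired with an equilibrium input supplied by the assumption, produces a constant-velocity trajectory whose rate, shift, and bound constraints are all satisfied trivially.

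First, I would set $j^\star=\phi(x_2(k))$. Since $v_\mathrm{min}\le x_2(k)\le v_\mathrm{max}$, the set $\Phi(x_2(k))$ is nonempty (implicit in the definitions of $v_\mathrm{min},v_\mathrm{max}$), so $\phi$ is well-defined and $x_2(k)\in\Omega(j^\star)$. Invoking the assumption at $(j,x_2)=(j^\star,x_2(k))$, I would pick $u_1^\star\in[T_\mathrm{min},T_\mathrm{max}]$ and $u_2^\star\in[F_\mathrm{b,min},F_\mathrm{b,max}]$ satisfying \eqref{eq:prop1_assumption}. The candidate is then $u_1^\prime(i|k)=u_1^\star$, $u_2^\prime(i|k)=u_2^\star$, $j(i|k)=j^\star$ for $i=0,\dots,N-1$, with states $x(0|k)=x(k)$ and $x(i+1|k)=f\big(x(i|k),[u_1^\star,u_2^\star,j^\star]^\top\big)$. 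By \eqref{eq:disc_dynam} and \eqref{eq:prop1_assumption}, the velocity increment vanishes, so $x_2(i|k)=x_2(k)$ for all $i$ and $x_1(i+1|k)=x_1(i|k)+x_2(k)\Delta t$.

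I would then verify the constraints one by one. The initial condition \eqref{eq:IC} holds by construction. Constraints \eqref{eq:acc_lim} and \eqref{eq:nlp_gear_shift_lim} reduce to $0\le a_\mathrm{max}\Delta t$ and $0\le 1$ since velocity and gear are constant, and \eqref{eq:T_lim} similarly holds since $u_1$ is constant. The dynamics constraint \eqref{eq:nlp_dynam_cnstr} is satisfied by construction of the candidate. For \eqref{eq:nlp_state_input_cnstr}, the torque and brake-force bounds in $\mathcal{C}$ hold by the assumption, while the engine-speed conditions $\omega_\mathrm{min}\le\omega\big(x_2(i|k),j^\star\big)\le\omega_\mathrm{max}$ and $\omega_\mathrm{min}\le\omega\big(x_2(i+1|k),j^\star\big)\le\omega_\mathrm{max}$ follow from $j^\star\in\Phi(x_2(k))$ together with $x_2(i|k)=x_2(i+1|k)=x_2(k)$.

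The argument is essentially constructive bookkeeping, so there is no deep obstacle; the main care required is ensuring that every constraint appearing in \eqref{eq:NLP}, including the two engine-speed conditions in $\mathcal{C}$ that involve both $x_2(i|k)$ and $x_2(i+1|k)$, is covered by the candidate, which is why freezing the velocity is essential. The only subtlety worth flagging is the well-definedness of $\phi$ at the boundary velocities, which follows from the choice of $v_\mathrm{min},v_\mathrm{max}$ as the extreme engine-speed-feasible velocities under gears $1$ and $6$.
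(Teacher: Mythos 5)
Your proposal is correct and follows essentially the same route as the paper's proof: both exhibit the constant-velocity equilibrium trajectory generated by the backup gear $\phi\big(x_2(k)\big)$ and the equilibrium input guaranteed by the assumption, then check each constraint of \eqref{eq:NLP} directly, with the constancy of $x_2$ handling both engine-speed conditions in $\mathcal{C}$. Your added remark on the well-definedness of $\phi$ (nonemptiness of $\Phi(x_2(k))$ for $v_\mathrm{min}\leq x_2(k)\leq v_\mathrm{max}$) is a small point the paper leaves implicit, but otherwise the arguments coincide.
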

\begin{proof}
	See Appendix \ref{proof:prop_1}.
\end{proof}
As $x_2^2$ is positive and monotonic, verifying condition \eqref{eq:prop1_assumption} for a given vehicle involves checking only the endpoints of the range $\Omega(j)$ for $j=1,\dots,6$, yielding $6\times2$ conditions to be verified.
The condition is satisfied for reasonable vehicle parameters, including those used in Section \ref{sec:simulations}.
Note that Proposition \ref{prop:1} guarantees instantaneous feasibility of \eqref{eq:NLP}.
Recursive feasibility follows trivially if the true underlying system is \eqref{eq:disc_dynam}.
In the case of modeling errors, e.g., when the dynamics \eqref{eq:disc_dynam} are a discrete-time approximation of the continuous-time system \eqref{eq:dynamics}, a robust MPC formulation would be required for recursive feasibility and constraint satisfaction.
This is left for future work.

The proposed control algorithm is given in Algorithm \ref{alg:controller}.
If the gear-shift schedule proposed by $\pi_\theta$ is infeasible, the control input is computed using the backup gear-shift schedule \eqref{eq:backup}, which is feasible by Proposition \ref{prop:1}.
\begin{algorithm}
	\small
	\caption{Control algorithm at time step $k$}\label{alg:controller}
	\begin{algorithmic}
		\State \textbf{Inputs}: $x(k)$, $\textbf{x}_\text{ref}(k)$, $\bar{\textbf{x}}(k)$, $\bar{\textbf{u}}^\prime(k)$, and $\bar{\textbf{j}}(k)$
		\State $\textbf{j}(k) \gets \pi_\theta\big(\bar{\textbf{x}}(k), \bar{\textbf{u}}^\prime(k), \textbf{x}_\text{ref}(k), \bar{\textbf{j}}(k)\big)$
		\State Solve \eqref{eq:NLP} for $J\big(x(k), \textbf{x}_\text{ref}(k), \textbf{j}(k)\big)$ and $\textbf{u}^{\prime, \ast}( k), \textbf{x}^\ast(k)$
		\State\quad\textbf{If} $J=\infty$ \textbf{then} $\textbf{j}(k) \gets \sigma\big(x_2(k)\big)$
		\State \quad\quad Solve \eqref{eq:NLP} for $J\big(x(k), \textbf{x}_\text{ref}(k), \textbf{j}(k)\big)$ and $\textbf{u}^{\prime, \ast}( k), \textbf{x}^\ast(k)$
		\State Apply $\big[u^{\prime, \ast, \top}(0 | k), j(0 | k)\big]^\top$ to the system
	\end{algorithmic}
\end{algorithm}

\section{GEAR-SHIFT SCHEDULE POLICY}\label{sec:learning}
For convenience in the following, define $q(i)$ to stack the $i$th elements from $\bar{\textbf{x}}, \bar{\textbf{u}}^\prime, \textbf{x}_\text{ref}$ and $\bar{\textbf{j}}$, e.g., 
	\begin{equation}
		\begin{aligned}
			&q(0) = \big[x^\top(k), u^{\prime, \ast, \top}(1|k-1), x_\text{ref}^\top(k), j(1|k-1)\big]^\top \\
			&q(N-1) = \big[x^{\ast, \top}(N|k-1) , u^{\prime, \ast, \top}(N-1|k-1), \\
			&\quad\quad\quad\quad\quad\quad x_\text{ref}^\top(k+N-1), j(N-1|k-1)\big]^\top.
		\end{aligned}
\end{equation}
We propose to model the policy $\pi_\theta$ with an NN, with the parameter $\theta$ the model weights.
Representing $\pi_\theta$ with a standard feed-forward NN has the key issue that the action space grows exponentially with the prediction horizon $N$, while the input space grows linearly.
Indeed, for a given $N$ there are $6^N$ possible gear-shift schedules and $4(N+1)+2N$ inputs (from $\bar{\textbf{x}},\textbf{x}_\text{ref}, \bar{\textbf{u}}^\prime,$ and $\bar{\textbf{j}}$).
An NN capable of representing the input-output mapping as $N$ increases may need to be very large and highly complex.
Furthermore, there is an explicit temporal relationship between gear-shifts, which is not structurally enforced in a feed-forward NN.
In light of these points, inspired by \cite{da2024integrating} we propose a sequence-to-sequence recurrent architecture using a recurrent NN (RNN), as shown in Figure \ref{fig:RNN}.
The inputs $\bar{\textbf{x}}, \bar{\textbf{u}}^\prime, \textbf{x}_\text{ref}$ and $\bar{\textbf{j}}$ are considered as $N$ different inputs in a chain, i.e., the vectors $q(i) \in \mathbb{R}^7$ for $i=0,\dots,N-1$.
A single RNN is trained with input space $\mathbb{R}^7$, where the output is a single gear position.
The output sequence of $N$ gear positions is then generated by sequentially evaluating the RNN on the chain of inputs $q(0),\dots, q(N-1)$.
In this way the recurrent structure results in a constant number of inputs and outputs for the network for any prediction horizons, with only the number of sequential evaluations changing with the horizon.
Furthermore, the temporal relationship is structurally enforced, i.e., the gear at time step $i+k$ considers the prior gears and inputs via the hidden state $h_i$.
\begin{figure}
	\centering
	\vspace{0.2cm}
	\input{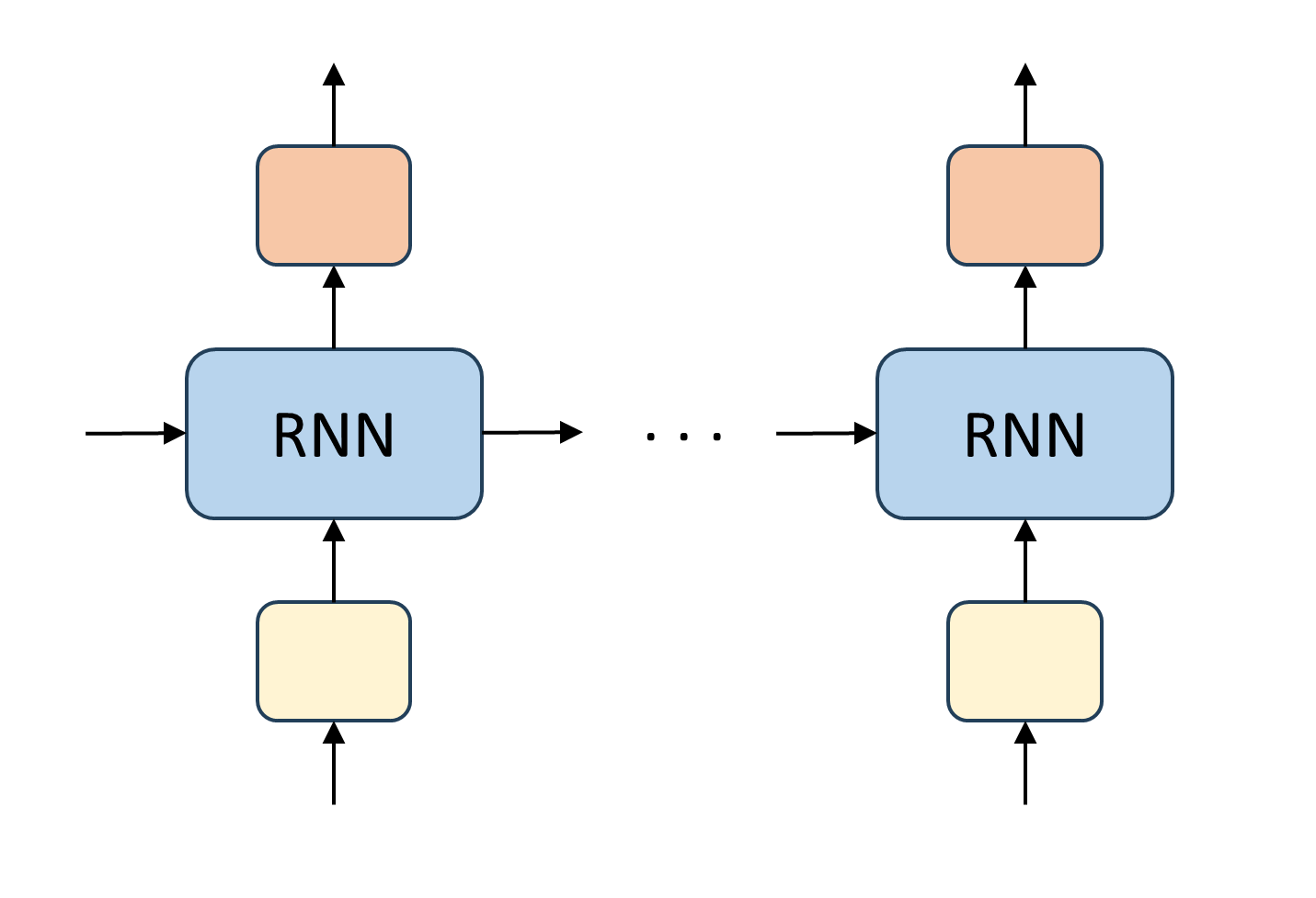}
	\caption{Recurrent NN, with hidden states $h_i$, showing how the chain of inputs sequentially generates the gear-shift schedule. The maps $\psi$ and $\eta$ are input and output transformations.}
	\label{fig:RNN}
	\vspace{-0.5cm}
\end{figure}

More formally the policy is defined as
\begin{equation}
	\begin{aligned}
		&\pi_\theta(\bar{\textbf{x}}, \bar{\textbf{u}}^\prime, \textbf{x}_\text{ref}, \bar{\textbf{j}}) = \bigg[\eta\bigg(y\Big(\psi\big(q(0)\big), h_0\Big)\bigg), \dots, \\
		&\quad \eta\bigg(y\Big(\psi\big(q(N-1)\big), h_{N-1}\Big)\bigg)\bigg]^\top,
	\end{aligned}
\end{equation}
where the input mapping $\psi$, defined by
\begin{equation}
	\begin{aligned}
		\psi\big(&q = [x^\top, u^{\prime, \top}, x_\text{ref}^\top, j]^\top\big) = \bigg[(x-x_\text{ref})^\top, \\
		&\frac{x_2 - v_\text{min}}{v_\text{max} - v_\text{min}}, \quad\frac{x_{2, \text{ref}} - v_\text{min}}{v_\text{max} - v_\text{min}}, u^{\prime, \top}, \omega(x_2, j), j\bigg]^\top,
	\end{aligned}
\end{equation}
transforms the inputs into a representation that contains the tracking error, the vehicle and reference velocities, and the predicted inputs, including the engine speed.
This representation is chosen to give the RNN the most relevant information for selecting $\textbf{j}$.
The model function of the RNN is
\begin{equation}
	\delta(i) = \big[\delta_1(i), \dots, \delta_6(i)\big]^\top = y\Big(\psi\big(q(i)\big), h_i\Big),
\end{equation} 
where $\delta_j(i)$ is the probability of choosing gear $j$ at the $i$th output in the sequence.
The output mapping $\eta$ selects the gear with the largest probability and applies a clipping: $\eta\big(\delta(0)\big) = \lambda_0$, and for $i > 0$
	\begin{equation}
		\eta\big(\delta(i)\big) = \begin{cases}
			\lambda_i & \text{if }-1 \leq \lambda_i - j(i-1|k) \leq 1 \\
			j(i-1|k)-1 & \text{if }\lambda_i - j(i-1|k) < -1 \\
			j(i-1|k)+1 & \text{if }\lambda_i - j(i-1|k) > 1
		\end{cases},
	\end{equation}
	with $\lambda_i = \argmax_j\delta_j(i)$.
	The clipping ensures that $\textbf{j}$ can be feasible by satisfying constraint \eqref{eq:nlp_gear_shift_lim}.

The policy $\pi_\theta$ can be trained in a supervised manner using a dataset of input-output pairs
\begin{equation}
	\mathcal{T} = \big\{\big((\bar{\textbf{x}}_l, \bar{\textbf{u}}^\prime_l, \textbf{x}_{\text{ref}, l}, \bar{\textbf{j}}_l), \textbf{j}_l\big)\big\}_{l=1}^{N_\text{data}}.
\end{equation}
In Section \ref{sec:simulations} the collection of $\mathcal{T}$ is detailed.
With the number of inputs and outputs of the model function $y$ independent of the prediction horizon $N$, an added benefit of the RNN architecture is that, once trained, the policy can be applied to an MPC controller with larger $N$ by applying longer input sequences.
Furthermore, the policy can be trained with data generated by different controllers with different horizons $N$. 

\section{COMPARISON CONTROLLERS}\label{sec:comparison_controllers}
In this section we outline three controllers against which the proposed method will be evaluated.

\textbf{MINLP-based MPC}: This MPC controller solves the MINLP \eqref{eq:MINLP} at each time step $k$, applying $u^\ast(0|k)$ to the system.
This controller provides the baseline performance for all other controllers, but is highly computationally intensive.

\textbf{Mixed-integer quadratic program-based (MIQP) MPC}: This controller follows the approach from \cite{shaoVehicleSpeedGear2021}, where the MINLP \eqref{eq:MINLP} is relaxed such that the remaining optimization problem is an MIQP.
In particular, a McCormick relaxation is applied to the bi-linear term in $L_\text{f}$, the quadratic term in the dynamics is replaced by a piecewise-linear approximation, and all bi-linear terms in the dynamics, e.g., $u_1 z(u_3)$, are replaced by mixed-integer inequalities (see \cite{shaoVehicleSpeedGear2021} for details).

\textbf{Hierarchical MPC}: This controller follows the principle of decoupling the optimization of the vehicle speed from the gear-shift schedule.
To this end, the simplified dynamics $x(k+1) = \tilde{f}\big(x(k), F(k)\big)$ are considered, where
\begin{equation}\label{eq:approx_dynam}
	\tilde{f}(x, F) = \begin{bmatrix}
		x_1 + \Delta t x_2 \\
		x_2 + \frac{\Delta t}{m}(F - C x_2^2 - G)
	\end{bmatrix}.
\end{equation}
The input $F$ replaces $T z(j)z_\text{f}/r - F_\text{b}$, the desired braking force and the applied force from the engine torque combined with the gear.
The following NLP is solved:
\begin{subequations}\label{eq:heirachical_NLP}
	\begin{align}
		J\big(x(k), &\textbf{x}_\text{ref}(k)\big) = \min_{\substack{\textbf{x}(k),\\ \textbf{F}(k)}} \sum_{i = 0}^{N} L_\text{t}\big(x(i|k), x_\text{ref}(i+k)\big) \\
		\text{s.t.} \quad &\eqref{eq:IC}, \eqref{eq:acc_lim}\\
		&\text{for} \quad i = 0,\dots,N-1: \nonumber \\
		&\quad x(i+1|k) = \tilde{f}\big(x(i|k), F(i|k)\big) \\
		&\quad T_\text{min}\frac{z(0) z_\text{f}}{r} - F_\text{b, max} \leq F(i|k) \leq F_\text{max}(k) \\
		&v_\text{min} \leq x_2(i|k) \leq v_\text{max} \quad i = 0,\dots,N 
	\end{align}
\end{subequations}
where the fuel cost cannot be considered as the powertrain dynamics are not modeled.
The bound $F_\text{max}(k)$ is determined at each time step $k$ by considering the gear that provides the most traction for the current velocity
\begin{equation}
	F_\text{max}(k) = T_\text{max} \cdot \max_{j \in \Phi(x_2(k))} \frac{z(j) z_\text{f}}{r}.
\end{equation}
The gear is then selected as $j(k\Delta t) = \phi\big(x_2(k)\big)$ and clipped such that \eqref{eq:gear_shift_lim} is respected.
In our simulations we found $j(k\Delta t) = \phi\big(x_2(k)\big) = \max_{j^\prime \in \Phi(x_2(k))}j^\prime$ to perform the best.
Finally, $T(k\Delta t)$ and $F_\text{b}(k\Delta t)$ are decided as 
\begin{equation}
	\begin{aligned}
		&T(k\Delta t) = \begin{cases}
			T_\text{min} & F^\ast(0|k) < 0 \\
			\frac{F^\ast(0|k) r}{z(j(k\Delta t)) z_\text{f}} & F^\ast(0|k) \geq 0
		\end{cases}, \\
		&F_\text{b}(k\Delta t) = \begin{cases}
			-F^\ast(0|k) + \frac{T_\text{min} z(j(k\Delta t)) z_\text{f}}{r} & F^\ast(0|k) < 0 \\
			0 & F^\ast(0|k) \geq 0
		\end{cases},
	\end{aligned}
\end{equation}
with the torque rate constraint \eqref{eq:T_lim} applied with clipping.

\section{SIMULATIONS}\label{sec:simulations}
In the following, MINLP problems are solved with Knitro \cite{byrd2006k}, MIQP problems are solved with Gurobi \cite{gurobi}, and NLP problems are solved with Ipopt \cite{wachter2006implementation}.
These solvers are each state-of-the-art for the respective type of optimization problem.
All coefficients defining the vehicle model can be found in \cite{shaoVehicleSpeedGear2021}, with bounds given in Table \ref{tab:dynamics_coefficients}.
Source code is available at \url{https://github.com/SamuelMallick/mpcrl-vehicle-gears}.
\begin{table}[h]
	\scriptsize	
	\centering
	\begin{tabular}{cccccc}
		\hline
		\textbf{Symbol} & $a$ & $v$ & $F_\text{b}\cdot 10^{-3}$ & $T$ & $\omega \cdot 10^{-3}$ \\ 
		\textbf{Bounds} & $[-3, 3]$ & $[2.2, 44.4]$ & $[0, 9]$ & $[15, 300]$ & $[0.9, 3]$ \\
		\hline
	\end{tabular}
	\caption{Variable bounds for the vehicle.}
	\label{tab:dynamics_coefficients}
\end{table}

For training the policy $\pi_\theta$ and for evaluation of the controllers, we consider episodic highway-driving scenarios with $\alpha = 0$.
Each episode requires a vehicle, initialized with a velocity in the range $[v_\text{min}+5, v_\text{max} -5]$ ms$^{-1}$, to track a random reference trajectory for 100s. 
Randomized reference trajectories are constructed as follows.
Beginning with velocity $x_{2, \text{ref}}(0) \sim \mathcal{U}(15, 25)$, the acceleration of the reference trajectory changes over five randomly spaced intervals.
For the first and last interval the acceleration is zero, with random values in $[-0.6, 0.6]$ ms$^{-2}$ for the other intervals.
Additionally, the reference velocity is clipped to the range $[5, 28]$ ms$^{-1}$ (18--100 kmh$^{-1}$).
To train $\pi_\theta$ with supervised learning the dataset $\mathcal{T}$ is generated using the MIQP-based MPC controller.
While the solution provided by MIQP is an approximation of the MINLP solution, we found the quality sufficient to train the policy $\pi_\theta$, and the computation time required to generate the data less.
Data is generated from 300 episodes, with $N=15$, and used to train an RNN with 4 layers of 256 features in the hidden state, followed by a fully connected linear layer.

\subsection{Evaluation}
To evaluate the performance of the controllers we compare the performance metric $P$, defined in \eqref{eq:metric}, over 100 episodes (not present in the training of $\pi_\theta$).
We select $\beta = 0.01$, tuned to balance the relative importance of the fuel consumption and the quadratic tracking error, and $Q=\text{diag}(1, 0.1)$.
All MPC controllers use a horizon of $N=15$, and both the MPC controllers and the underlying simulation use a time step of $\Delta t = 1$s.
\begin{color}{black}The backup solution is defined via $\phi(x_2) = \max_{j^\prime \in \Phi(x_2)}j^\prime$.\end{color}
For the first time step of each episode there are no shifted solutions available.
Hence, for the proposed approach the policy $\pi_\theta$ is not used for the fist time step, with instead the MIQP-based MPC problem providing $\textbf{j}$.

Using MINLP-based MPC (denoted NM) as a baseline, define the cost increase introduced by each controller as
\begin{equation}\label{eq:cost_increase}
	\Delta P_{\text{type}} = 100 \cdot \frac{P_\text{type} - P_\text{NM}}{P_\text{NM}},
\end{equation}
with $\text{type} \in \{\text{QM}, \text{LM}, \text{HM}\}$ representing the MIQP-based MPC, the proposed approach, and the hierarchical MPC, respectively.
Figure \ref{fig:eval} shows a box-and-whiskers plot of $\Delta P$ and the solve time required for each controller\footnote{Knitro experienced occasional numerical issues when solving \eqref{eq:MINLP}. In these cases, the MINLP solver Bonmin \cite{bonami2007bonmin} is used as a backup solver.} %When reporting solve times, only the solve time for the successful solver is considered at each time step.}.
\begin{color}{black}It can be seen that LM requires significantly less computation time than QM and NM, with QM and NM unable to always find a solution within the MPC sample time $\Delta t$.\end{color}
Furthermore, the performance drop is negligible, with the median even improving over QM, likely due to the use of the exact fuel and friction models in the prediction model.
In contrast, while HM requires even less computation time than LM, as an even simpler NLP is solved, the performance drop is significant.
For LM the backup solution was used for $0.27\%$ of time steps.
For these time steps the average 1-norm difference between the policy and backup gear-shift schedules $|\pi_\theta(\bar{\textbf{x}}, \bar{\textbf{u}}^\prime, \textbf{x}_\text{ref}, \bar{\textbf{j}}) - \sigma(x_2)|_1$ was equal to $12$.
State and input trajectories for each controller on a representative episode are given in Figure \ref{fig:traj}.
%Representative state and input trajectories are shown in the longer online version of this article \cite{mallick2025learning}.

\begin{figure*}
	\centering
	\subfloat[$N=15$.\label{fig:eval}]{% This file was created with tikzplotlib v0.10.1.
\begin{tikzpicture}

\definecolor{darkgray176}{RGB}{176,176,176}
\definecolor{darkorange25512714}{RGB}{255,127,14}
\definecolor{gray}{RGB}{128,128,128}

\newcommand{\mW}{0.75*\axisdefaultwidth}
\newcommand{\mH}{0.8*\axisdefaultheight}

\begin{groupplot}[group style={group size=1 by 2, vertical sep = 0.1cm}]
\nextgroupplot[
height=\mH,
width=\mW,
scaled x ticks=manual:{}{\pgfmathparse{#1}},
tick align=outside,
tick pos=left,
tick scale binop=\times,
x grid style={darkgray176},
xmin=0.5, xmax=3.5,
xtick pos=right,
xtick style={color=black},
xtick={1,2,3},
xticklabels={QM,LM,HM},
y grid style={darkgray176},
ylabel={\(\Delta P\)},
ymin=-10.2637199798344, ymax=60.5012217079805,
ytick pos=left,
ytick style={color=black}
]
\addplot [black]
table {%
0.85 -1.41935778087897
1.15 -1.41935778087897
1.15 8.55642924709976
0.85 8.55642924709976
0.85 -1.41935778087897
};
\addplot [black]
table {%
1 -1.41935778087897
1 -7.03818660412987
};
\addplot [black]
table {%
1 8.55642924709976
1 17.3870949294238
};
\addplot [black]
table {%
0.925 -7.03818660412987
1.075 -7.03818660412987
};
\addplot [black]
table {%
0.925 17.3870949294238
1.075 17.3870949294238
};
\addplot [black, mark=o, mark size=3, mark options={solid,fill opacity=0}, only marks]
table {%
1 24.1262493980487
1 24.9727611398451
};
\addplot [black]
table {%
1.85 -0.112614731223458
2.15 -0.112614731223458
2.15 2.36378182710776
1.85 2.36378182710776
1.85 -0.112614731223458
};
\addplot [black]
table {%
2 -0.112614731223458
2 -3.80765403017423
};
\addplot [black]
table {%
2 2.36378182710776
2 5.91330541734733
};
\addplot [black]
table {%
1.925 -3.80765403017423
2.075 -3.80765403017423
};
\addplot [black]
table {%
1.925 5.91330541734733
2.075 5.91330541734733
};
\addplot [black, mark=o, mark size=3, mark options={solid,fill opacity=0}, only marks]
table {%
2 -4.02028027723297
2 8.19547069299594
2 6.45187684116487
2 6.93104316839888
2 13.7822608366038
2 7.55094863141199
2 7.56606652811417
2 6.67141584254597
2 9.43018875207574
2 13.9773103699074
2 7.39716190609549
2 7.36620247046713
};
\addplot [black]
table {%
2.85 7.29517893181112
3.15 7.29517893181112
3.15 23.792443473072
2.85 23.792443473072
2.85 7.29517893181112
};
\addplot [black]
table {%
3 7.29517893181112
3 -7.0471317212974
};
\addplot [black]
table {%
3 23.792443473072
3 47.7237674856573
};
\addplot [black]
table {%
2.925 -7.0471317212974
3.075 -7.0471317212974
};
\addplot [black]
table {%
2.925 47.7237674856573
3.075 47.7237674856573
};
\addplot [black, mark=o, mark size=3, mark options={solid,fill opacity=0}, only marks]
table {%
3 57.2846334494435
};
\path [draw=gray, very thin, dash pattern=on 1.85pt off 0.8pt]
(axis cs:0.7,1.0225913009127)
--(axis cs:1.3,1.0225913009127);

\path [draw=gray, very thin, dash pattern=on 1.85pt off 0.8pt]
(axis cs:1.7,0.283354630812836)
--(axis cs:2.3,0.283354630812836);

\path [draw=gray, very thin, dash pattern=on 1.85pt off 0.8pt]
(axis cs:2.7,11.7992364139636)
--(axis cs:3.3,11.7992364139636);

\addplot [darkorange25512714]
table {%
0.85 1.0225913009127
1.15 1.0225913009127
};
\addplot [darkorange25512714]
table {%
1.85 0.283354630812836
2.15 0.283354630812836
};
\addplot [darkorange25512714]
table {%
2.85 11.7992364139636
3.15 11.7992364139636
};
\draw (axis cs:1.3,1.0225913009127) node[
  scale=0.5,
  anchor=west,
  text=black,
  rotate=0.0
]{1.02};
\draw (axis cs:2.3,0.283354630812836) node[
  scale=0.5,
  anchor=west,
  text=black,
  rotate=0.0
]{0.28};
\draw (axis cs:3.15,15) node[
  scale=0.5,
  anchor=west,
  text=black,
  rotate=0.0
]{11.80};

\nextgroupplot[
height=\mH,
width=\mW,
ylabel=time (s),
log basis y={10},
tick align=outside,
tick pos=left,
tick scale binop=\times,
x grid style={darkgray176},
xmin=0.5, xmax=4.5,
xtick style={color=black},
xtick={1,2,3,4},
xticklabels={QM,LM,HM,NM},
y grid style={darkgray176},
ymin=0.00626415093484417, ymax=3920.57631598551,
ymode=log,
ytick style={color=black},
ytick={1e-05,0.001,0.1,10,1000,100000,10000000},
yticklabels={
	\(\displaystyle {10^{-5}}\),
	\(\displaystyle {10^{-3}}\),
	\(\displaystyle {10^{-1}}\),
	\(\displaystyle {10^{1}}\),
	\(\displaystyle {10^{3}}\),
	\(\displaystyle {10^{5}}\),
	\(\displaystyle {10^{7}}\)
}
]
\addplot [black]
table {%
0.775 0.4698463925
1.225 0.4698463925
1.225 1.4894727825
0.775 1.4894727825
0.775 0.4698463925
};
\addplot [black]
table {%
1 0.4698463925
1 0.18130622
};
\addplot [black]
table {%
1 1.4894727825
1 2.57537356
};
\addplot [black]
table {%
0.8875 0.18130622
1.1125 0.18130622
};
\addplot [black]
table {%
0.8875 2.57537356
1.1125 2.57537356
};
\addplot [black, mark=o, mark size=3, mark options={solid,fill opacity=0}, only marks]
table {%
1 3.79137132
1 3.57766409
1 3.2516627
1 3.5396824
};
\addplot [black]
table {%
1.775 0.030484695
2.225 0.030484695
2.225 0.0380090075
1.775 0.0380090075
1.775 0.030484695
};
\addplot [black]
table {%
2 0.030484695
2 0.02338722
};
\addplot [black]
table {%
2 0.0380090075
2 0.04692563
};
\addplot [black]
table {%
1.8875 0.02338722
2.1125 0.02338722
};
\addplot [black]
table {%
1.8875 0.04692563
2.1125 0.04692563
};
\addplot [black, mark=o, mark size=3, mark options={solid,fill opacity=0}, only marks]
table {%
2 0.05790369
};
\addplot [black]
table {%
2.775 0.015757725
3.225 0.015757725
3.225 0.017462945
2.775 0.017462945
2.775 0.015757725
};
\addplot [black]
table {%
3 0.015757725
3 0.01353462
};
\addplot [black]
table {%
3 0.017462945
3 0.0188649
};
\addplot [black]
table {%
2.8875 0.01353462
3.1125 0.01353462
};
\addplot [black]
table {%
2.8875 0.0188649
3.1125 0.0188649
};
\addplot [black, mark=o, mark size=3, mark options={solid,fill opacity=0}, only marks]
table {%
3 0.02061706
};
\addplot [black]
table {%
3.775 0.29289742
4.225 0.29289742
4.225 3.26988801
3.775 3.26988801
3.775 0.29289742
};
\addplot [black]
table {%
4 0.29289742
4 0.19683725
};
\addplot [black]
table {%
4 3.26988801
4 7.34310253
};
\addplot [black]
table {%
3.8875 0.19683725
4.1125 0.19683725
};
\addplot [black]
table {%
3.8875 7.34310253
4.1125 7.34310253
};
\addplot [black, mark=o, mark size=3, mark options={solid,fill opacity=0}, only marks]
table {%
4 8.19729381818182
4 12.51600074
4 26.58661901
4 16.22203247
4 8.42588299
4 8.82654386
4 11.1706538282828
4 10.97462942
4 13.75653293
4 9.68467502
4 17.91914342
};
\path [draw=gray, very thin, dash pattern=on 1.85pt off 0.8pt]
(axis cs:0.7,0.81643432)
--(axis cs:1.3,0.81643432);

\addplot [semithick, red, mark=triangle*, mark size=3, mark options={solid}]
table {%
1 31.996338
};
\path [draw=gray, very thin, dash pattern=on 1.85pt off 0.8pt]
(axis cs:1.7,0.03422805)
--(axis cs:2.3,0.03422805);

\addplot [semithick, red, mark=triangle*, mark size=3, mark options={solid}]
table {%
2 0.191005
};
\path [draw=gray, very thin, dash pattern=on 1.85pt off 0.8pt]
(axis cs:2.7,0.016633815)
--(axis cs:3.3,0.016633815);

\addplot [semithick, red, mark=triangle*, mark size=3, mark options={solid}]
table {%
3 0.058032
};
\path [draw=gray, very thin, dash pattern=on 1.85pt off 0.8pt]
(axis cs:3.7,0.6269573)
--(axis cs:4.3,0.6269573);

\addplot [semithick, red, mark=triangle*, mark size=3, mark options={solid}]
table {%
4 662.503133
};
\addplot [darkorange25512714]
table {%
0.775 0.81643432
1.225 0.81643432
};
\addplot [darkorange25512714]
table {%
1.775 0.03422805
2.225 0.03422805
};
\addplot [darkorange25512714]
table {%
2.775 0.016633815
3.225 0.016633815
};
\addplot [darkorange25512714]
table {%
3.775 0.6269573
4.225 0.6269573
};
\draw (axis cs:1.3,0.81643432) node[
  scale=0.5,
  anchor=west,
  text=black,
  rotate=0.0
]{0.82};
\draw (axis cs:1.1,31.996338) node[
  scale=0.5,
  anchor=west,
  text=red,
  rotate=0.0
]{32.00};
\draw (axis cs:2.3,0.03422805) node[
  scale=0.5,
  anchor=west,
  text=black,
  rotate=0.0
]{0.03};
\draw (axis cs:2.1,0.191005) node[
  scale=0.5,
  anchor=west,
  text=red,
  rotate=0.0
]{0.19};
\draw (axis cs:3.3,0.016633815) node[
  scale=0.5,
  anchor=west,
  text=black,
  rotate=0.0
]{0.02};
\draw (axis cs:3.1,0.058032) node[
  scale=0.5,
  anchor=west,
  text=red,
  rotate=0.0
]{0.06};
\draw (axis cs:4.2,0.95) node[
  scale=0.5,
  anchor=west,
  text=black,
  rotate=0.0
]{0.63};
\draw (axis cs:4.1,662.5) node[
  scale=0.5,
  anchor=west,
  text=red,
  rotate=0.0
]{662.5};
\end{groupplot}

\end{tikzpicture}}
	\subfloat[$N=15$ with headwind.\label{fig:eval_wind}]{% This file was created with tikzplotlib v0.10.1.
\begin{tikzpicture}

\definecolor{darkgray176}{RGB}{176,176,176}
\definecolor{darkorange25512714}{RGB}{255,127,14}
\definecolor{gray}{RGB}{128,128,128}

\newcommand{\mW}{0.75*\axisdefaultwidth}
\newcommand{\mH}{0.8*\axisdefaultheight}

\begin{groupplot}[group style={group size=1 by 2, vertical sep = 0.1cm}]
\nextgroupplot[
height=\mH,
width=\mW,
scaled x ticks=manual:{}{\pgfmathparse{#1}},
tick align=outside,
tick pos=left,
tick scale binop=\times,
x grid style={darkgray176},
xmin=0.5, xmax=3.5,
xtick pos=right,
xtick style={color=black},
xtick={1,2,3},
xticklabels={QM,LM,HM},
y grid style={darkgray176},
ymin=-10.2637199798344, ymax=60.5012217079805,
ytick pos=left,
ytick style={color=black},
yticklabels={}
]
\addplot [black]
table {%
0.85 -0.373761653964891
1.15 -0.373761653964891
1.15 3.87669010675658
0.85 3.87669010675658
0.85 -0.373761653964891
};
\addplot [black]
table {%
1 -0.373761653964891
1 -6.60319951118018
};
\addplot [black]
table {%
1 3.87669010675658
1 9.83964875153002
};
\addplot [black]
table {%
0.925 -6.60319951118018
1.075 -6.60319951118018
};
\addplot [black]
table {%
0.925 9.83964875153002
1.075 9.83964875153002
};
\addplot [black, mark=o, mark size=3, mark options={solid,fill opacity=0}, only marks]
table {%
1 11.2963955584055
1 13.3049313518276
1 15.224624892402
1 11.2691514479559
1 14.8281811455399
1 16.5105576199999
};
\addplot [black]
table {%
1.85 -0.139280317366536
2.15 -0.139280317366536
2.15 1.76218345729149
1.85 1.76218345729149
1.85 -0.139280317366536
};
\addplot [black]
table {%
2 -0.139280317366536
2 -2.46453250811353
};
\addplot [black]
table {%
2 1.76218345729149
2 4.57898851723963
};
\addplot [black]
table {%
1.925 -2.46453250811353
2.075 -2.46453250811353
};
\addplot [black]
table {%
1.925 4.57898851723963
2.075 4.57898851723963
};
\addplot [black, mark=o, mark size=3, mark options={solid,fill opacity=0}, only marks]
table {%
2 -4.06599719091479
2 -5.42931131923265
2 4.94143801657543
2 8.46576526487015
2 7.14098862181723
};
\addplot [black]
table {%
2.85 2.80457685409422
3.15 2.80457685409422
3.15 13.7162840858868
2.85 13.7162840858868
2.85 2.80457685409422
};
\addplot [black]
table {%
3 2.80457685409422
3 -4.75319212919879
};
\addplot [black]
table {%
3 13.7162840858868
3 30.0708199332805
};
\addplot [black]
table {%
2.925 -4.75319212919879
3.075 -4.75319212919879
};
\addplot [black]
table {%
2.925 30.0708199332805
3.075 30.0708199332805
};
\addplot [black, mark=o, mark size=3, mark options={solid,fill opacity=0}, only marks]
table {%
3 35.0308355714046
3 35.2908896470761
};
\path [draw=gray, very thin, dash pattern=on 1.85pt off 0.8pt]
(axis cs:0.7,0.961720580091415)
--(axis cs:1.3,0.961720580091415);

\path [draw=gray, very thin, dash pattern=on 1.85pt off 0.8pt]
(axis cs:1.7,0.158818417985345)
--(axis cs:2.3,0.158818417985345);

\path [draw=gray, very thin, dash pattern=on 1.85pt off 0.8pt]
(axis cs:2.7,7.45316020732673)
--(axis cs:3.3,7.45316020732673);

\addplot [darkorange25512714]
table {%
0.85 0.961720580091415
1.15 0.961720580091415
};
\addplot [darkorange25512714]
table {%
1.85 0.158818417985345
2.15 0.158818417985345
};
\addplot [darkorange25512714]
table {%
2.85 7.45316020732673
3.15 7.45316020732673
};
\draw (axis cs:1.3,0.961720580091415) node[
  scale=0.5,
  anchor=west,
  text=black,
  rotate=0.0
]{0.96};
\draw (axis cs:2.3,0.13863723244363) node[
  scale=0.5,
  anchor=west,
  text=black,
  rotate=0.0
]{0.16};
\draw (axis cs:3.2,10) node[
  scale=0.5,
  anchor=west,
  text=black,
  rotate=0.0
]{7.45};

\nextgroupplot[
height=\mH,
width=\mW,
log basis y={10},
tick align=outside,
tick pos=left,
tick scale binop=\times,
x grid style={darkgray176},
xmin=0.5, xmax=4.5,
xtick style={color=black},
xtick={1,2,3,4},
xticklabels={QM,LM,HM,NM},
y grid style={darkgray176},
ymin=0.00626415093484417, ymax=3920.57631598551,
ymode=log,
ytick style={color=black},
ytick={1e-05,0.001,0.1,10,1000,100000,10000000},
yticklabels={}
]
\addplot [black]
table {%
0.775 0.5714569825
1.225 0.5714569825
1.225 1.4919861275
0.775 1.4919861275
0.775 0.5714569825
};
\addplot [black]
table {%
1 0.5714569825
1 0.22071672
};
\addplot [black]
table {%
1 1.4919861275
1 2.79295687
};
\addplot [black]
table {%
0.8875 0.22071672
1.1125 0.22071672
};
\addplot [black]
table {%
0.8875 2.79295687
1.1125 2.79295687
};
\addplot [black, mark=o, mark size=3, mark options={solid,fill opacity=0}, only marks]
table {%
1 3.74221861
1 3.02943756
1 2.90379185
1 3.49136958
1 3.72801159
};
\addplot [black]
table {%
1.775 0.0313386875
2.225 0.0313386875
2.225 0.0384791775
1.775 0.0384791775
1.775 0.0313386875
};
\addplot [black]
table {%
2 0.0313386875
2 0.02346793
};
\addplot [black]
table {%
2 0.0384791775
2 0.04662599
};
\addplot [black]
table {%
1.8875 0.02346793
2.1125 0.02346793
};
\addplot [black]
table {%
1.8875 0.04662599
2.1125 0.04662599
};
\addplot [black, mark=o, mark size=3, mark options={solid,fill opacity=0}, only marks]
table {%
2 0.06493177
2 0.05083087
};
\addplot [black]
table {%
2.775 0.015280755
3.225 0.015280755
3.225 0.017392325
2.775 0.017392325
2.775 0.015280755
};
\addplot [black]
table {%
3 0.015280755
3 0.01312598
};
\addplot [black]
table {%
3 0.017392325
3 0.01908473
};
\addplot [black]
table {%
2.8875 0.01312598
3.1125 0.01312598
};
\addplot [black]
table {%
2.8875 0.01908473
3.1125 0.01908473
};
\addplot [black, mark=o, mark size=3, mark options={solid,fill opacity=0}, only marks]
table {%
3 0.01137967
};
\addplot [black]
table {%
3.775 0.48492912
4.225 0.48492912
4.225 5.429599185
3.775 5.429599185
3.775 0.48492912
};
\addplot [black]
table {%
4 0.48492912
4 0.20893914
};
\addplot [black]
table {%
4 5.429599185
4 11.92932359
};
\addplot [black]
table {%
3.8875 0.20893914
4.1125 0.20893914
};
\addplot [black]
table {%
3.8875 11.92932359
4.1125 11.92932359
};
\addplot [black, mark=o, mark size=3, mark options={solid,fill opacity=0}, only marks]
table {%
4 13.13205711
4 15.96387688
4 16.52874718
4 16.40358513
4 13.31877614
4 18.83983166
4 18.4452843
4 13.35687524
};
\path [draw=gray, very thin, dash pattern=on 1.85pt off 0.8pt]
(axis cs:0.7,0.93892199)
--(axis cs:1.3,0.93892199);

\addplot [semithick, red, mark=triangle*, mark size=3, mark options={solid}]
table {%
1 28.346556
};
\path [draw=gray, very thin, dash pattern=on 1.85pt off 0.8pt]
(axis cs:1.7,0.03457225)
--(axis cs:2.3,0.03457225);

\addplot [semithick, red, mark=triangle*, mark size=3, mark options={solid}]
table {%
2 0.165471
};
\path [draw=gray, very thin, dash pattern=on 1.85pt off 0.8pt]
(axis cs:2.7,0.016566695)
--(axis cs:3.3,0.016566695);

\addplot [semithick, red, mark=triangle*, mark size=3, mark options={solid}]
table {%
3 0.057121
};
\path [draw=gray, very thin, dash pattern=on 1.85pt off 0.8pt]
(axis cs:3.7,0.959167555)
--(axis cs:4.3,0.959167555);

\addplot [semithick, red, mark=triangle*, mark size=3, mark options={solid}]
table {%
4 116.792592
};
\addplot [darkorange25512714]
table {%
0.775 0.93892199
1.225 0.93892199
};
\addplot [darkorange25512714]
table {%
1.775 0.03457225
2.225 0.03457225
};
\addplot [darkorange25512714]
table {%
2.775 0.016566695
3.225 0.016566695
};
\addplot [darkorange25512714]
table {%
3.775 0.959167555
4.225 0.959167555
};
\draw (axis cs:1.3,0.93892199) node[
  scale=0.5,
  anchor=west,
  text=black,
  rotate=0.0
]{0.94};
\draw (axis cs:1.1,28.346556) node[
  scale=0.5,
  anchor=west,
  text=red,
  rotate=0.0
]{28.35};
\draw (axis cs:2.3,0.0272039) node[
  scale=0.5,
  anchor=west,
  text=black,
  rotate=0.0
]{0.03};
\draw (axis cs:2.1,0.092295) node[
  scale=0.5,
  anchor=west,
  text=red,
  rotate=0.0
]{0.17};
\draw (axis cs:3.3,0.01212531) node[
  scale=0.5,
  anchor=west,
  text=black,
  rotate=0.0
]{0.02};
\draw (axis cs:3.1,0.0626438) node[
  scale=0.5,
  anchor=west,
  text=red,
  rotate=0.0
]{0.06};
\draw (axis cs:4.2,1.5) node[
  scale=0.5,
  anchor=west,
  text=black,
  rotate=0.0
]{0.96};
\draw (axis cs:4.1,116.792592) node[
  scale=0.5,
  anchor=west,
  text=red,
  rotate=0.0
]{116.8};
\end{groupplot}

\end{tikzpicture}}
	\subfloat[$N=20$.\label{fig:eval_N_20}]{% This file was created with tikzplotlib v0.10.1.
\begin{tikzpicture}

\definecolor{darkgray176}{RGB}{176,176,176}
\definecolor{darkorange25512714}{RGB}{255,127,14}
\definecolor{gray}{RGB}{128,128,128}

\newcommand{\mW}{0.75*\axisdefaultwidth}
\newcommand{\mH}{0.8*\axisdefaultheight}

\begin{groupplot}[group style={group size=1 by 2, vertical sep = 0.1cm}]
\nextgroupplot[
height=\mH,
width=\mW,
scaled x ticks=manual:{}{\pgfmathparse{#1}},
tick align=outside,
tick pos=left,
tick scale binop=\times,
x grid style={darkgray176},
xmin=0.5, xmax=3.5,
xtick pos=right,
xtick style={color=black},
xtick={1,2,3},
xticklabels={QM,LM,HM},
y grid style={darkgray176},
ymin=-10.2637199798344, ymax=60.5012217079805,
ytick pos=left,
ytick style={color=black},
yticklabels={}
]
\addplot [black]
table {%
0.85 -2.60781873330854
1.15 -2.60781873330854
1.15 7.79056981199698
0.85 7.79056981199698
0.85 -2.60781873330854
};
\addplot [black]
table {%
1 -2.60781873330854
1 -6.62338264966304
};
\addplot [black]
table {%
1 7.79056981199698
1 20.3792069748875
};
\addplot [black]
table {%
0.925 -6.62338264966304
1.075 -6.62338264966304
};
\addplot [black]
table {%
0.925 20.3792069748875
1.075 20.3792069748875
};
\addplot [black]
table {%
1.85 -0.144020818135947
2.15 -0.144020818135947
2.15 5.24268097253421
1.85 5.24268097253421
1.85 -0.144020818135947
};
\addplot [black]
table {%
2 -0.144020818135947
2 -4.79384337269877
};
\addplot [black]
table {%
2 5.24268097253421
2 13.0690379443284
};
\addplot [black]
table {%
1.925 -4.79384337269877
2.075 -4.79384337269877
};
\addplot [black]
table {%
1.925 13.0690379443284
2.075 13.0690379443284
};
\addplot [black, mark=o, mark size=3, mark options={solid,fill opacity=0}, only marks]
table {%
2 -8.73145116503373
2 18.3627693596937
2 18.6783854693965
2 17.8749925228041
2 16.2273668782044
2 14.5142226130713
2 15.0654237424834
2 13.6724032050701
};
\addplot [black]
table {%
2.85 6.39825931391753
3.15 6.39825931391753
3.15 21.8529558088328
2.85 21.8529558088328
2.85 6.39825931391753
};
\addplot [black]
table {%
3 6.39825931391753
3 -5.21082900798735
};
\addplot [black]
table {%
3 21.8529558088328
3 39.2315021860511
};
\addplot [black]
table {%
2.925 -5.21082900798735
3.075 -5.21082900798735
};
\addplot [black]
table {%
2.925 39.2315021860511
3.075 39.2315021860511
};
\addplot [black, mark=o, mark size=3, mark options={solid,fill opacity=0}, only marks]
table {%
3 48.7884309391171
3 57.5497021905681
3 53.2114834022246
3 47.4576532723387
};
\path [draw=gray, very thin, dash pattern=on 1.85pt off 0.8pt]
(axis cs:0.7,-0.6126847831035)
--(axis cs:1.3,-0.6126847831035);

\path [draw=gray, very thin, dash pattern=on 1.85pt off 0.8pt]
(axis cs:1.7,0.789998020606323)
--(axis cs:2.3,0.789998020606323);

\path [draw=gray, very thin, dash pattern=on 1.85pt off 0.8pt]
(axis cs:2.7,11.9587872832521)
--(axis cs:3.3,11.9587872832521);

\addplot [darkorange25512714]
table {%
0.85 -0.6126847831035
1.15 -0.6126847831035
};
\addplot [darkorange25512714]
table {%
1.85 0.789998020606323
2.15 0.789998020606323
};
\addplot [darkorange25512714]
table {%
2.85 11.9587872832521
3.15 11.9587872832521
};
\draw (axis cs:1.3,-0.6126847831035) node[
  scale=0.5,
  anchor=west,
  text=black,
  rotate=0.0
]{-0.61};
\draw (axis cs:2.3,0.695202266059774) node[
  scale=0.5,
  anchor=west,
  text=black,
  rotate=0.0
]{0.79};
\draw (axis cs:3.2,15.5) node[
  scale=0.5,
  anchor=west,
  text=black,
  rotate=0.0
]{11.96};

\nextgroupplot[
height=\mH,
width=\mW,
log basis y={10},
tick align=outside,
tick pos=left,
tick scale binop=\times,
x grid style={darkgray176},
xmin=0.5, xmax=4.5,
xtick style={color=black},
xtick={1,2,3,4},
xticklabels={QM,LM,HM,NM},
y grid style={darkgray176},
ymin=0.00626415093484417, ymax=3920.57631598551,
ymode=log,
ytick style={color=black},
ytick={1e-05,0.001,0.1,10,1000,100000,10000000},
yticklabels={}
]
\addplot [black]
table {%
0.775 1.6862862
1.225 1.6862862
1.225 24.40866503
0.775 24.40866503
0.775 1.6862862
};
\addplot [black]
table {%
1 1.6862862
1 0.42988879
};
\addplot [black]
table {%
1 24.40866503
1 56.24662122
};
\addplot [black]
table {%
0.8875 0.42988879
1.1125 0.42988879
};
\addplot [black]
table {%
0.8875 56.24662122
1.1125 56.24662122
};
\addplot [black, mark=o, mark size=3, mark options={solid,fill opacity=0}, only marks]
table {%
1 64.24503184
1 134.82635467
1 67.47035566
1 75.62975956
1 99.97901681
1 80.71097514
1 113.10856944
1 61.5700859
1 85.7929367100001
1 78.58151167
1 75.69655017
1 114.33687444
1 73.67998395
};
\addplot [black]
table {%
1.775 0.057494465
2.225 0.057494465
2.225 0.0636180225
1.775 0.0636180225
1.775 0.057494465
};
\addplot [black]
table {%
2 0.057494465
2 0.04843763
};
\addplot [black]
table {%
2 0.0636180225
2 0.07040062
};
\addplot [black]
table {%
1.8875 0.04843763
2.1125 0.04843763
};
\addplot [black]
table {%
1.8875 0.07040062
2.1125 0.07040062
};
\addplot [black, mark=o, mark size=3, mark options={solid,fill opacity=0}, only marks]
table {%
2 0.04820865
2 0.04773399
2 0.04713834
2 0.0774587
2 0.08357099
2 0.07901262
2 0.07357196
};
\addplot [black]
table {%
2.775 0.0173388225
3.225 0.0173388225
3.225 0.019509995
2.775 0.019509995
2.775 0.0173388225
};
\addplot [black]
table {%
3 0.0173388225
3 0.01493653
};
\addplot [black]
table {%
3 0.019509995
3 0.02110971
};
\addplot [black]
table {%
2.8875 0.01493653
3.1125 0.01493653
};
\addplot [black]
table {%
2.8875 0.02110971
3.1125 0.02110971
};
\addplot [black]
table {%
3.775 0.407343805
4.225 0.407343805
4.225 11.1332591544697
3.775 11.1332591544697
3.775 0.407343805
};
\addplot [black]
table {%
4 0.407343805
4 0.26982801
};
\addplot [black]
table {%
4 11.1332591544697
4 25.47839431
};
\addplot [black]
table {%
3.8875 0.26982801
4.1125 0.26982801
};
\addplot [black]
table {%
3.8875 25.47839431
4.1125 25.47839431
};
\addplot [black, mark=o, mark size=3, mark options={solid,fill opacity=0}, only marks]
table {%
4 28.92314804
4 27.81122097
4 86.47236291
4 62.8193652
4 29.23185254
4 36.97507966
4 33.3462661212121
4 29.51604055
};
\path [draw=gray, very thin, dash pattern=on 1.85pt off 0.8pt]
(axis cs:0.7,7.90738767)
--(axis cs:1.3,7.90738767);

\addplot [semithick, red, mark=triangle*, mark size=3, mark options={solid}]
table {%
1 1583.5521
};
\path [draw=gray, very thin, dash pattern=on 1.85pt off 0.8pt]
(axis cs:1.7,0.06008081)
--(axis cs:2.3,0.06008081);

\addplot [semithick, red, mark=triangle*, mark size=3, mark options={solid}]
table {%
2 0.250043
};
\path [draw=gray, very thin, dash pattern=on 1.85pt off 0.8pt]
(axis cs:2.7,0.01841381)
--(axis cs:3.3,0.01841381);

\addplot [semithick, red, mark=triangle*, mark size=3, mark options={solid}]
table {%
3 0.062423
};
\path [draw=gray, very thin, dash pattern=on 1.85pt off 0.8pt]
(axis cs:3.7,1.133737135)
--(axis cs:4.3,1.133737135);

\addplot [semithick, red, mark=triangle*, mark size=3, mark options={solid}]
table {%
4 2137.337957
};
\addplot [darkorange25512714]
table {%
0.775 7.90738767
1.225 7.90738767
};
\addplot [darkorange25512714]
table {%
1.775 0.06008081
2.225 0.06008081
};
\addplot [darkorange25512714]
table {%
2.775 0.01841381
3.225 0.01841381
};
\addplot [darkorange25512714]
table {%
3.775 1.133737135
4.225 1.133737135
};
\draw (axis cs:1.3,7.90738767) node[
  scale=0.5,
  anchor=west,
  text=black,
  rotate=0.0
]{7.91};
\draw (axis cs:1.1,1583.5521) node[
  scale=0.5,
  anchor=west,
  text=red,
  rotate=0.0
]{1583.55};
\draw (axis cs:2.3,0.049669015) node[
  scale=0.5,
  anchor=west,
  text=black,
  rotate=0.0
]{0.06};
\draw (axis cs:2.1,0.13696) node[
  scale=0.5,
  anchor=west,
  text=red,
  rotate=0.0
]{0.25};
\draw (axis cs:3.3,0.013932125) node[
  scale=0.5,
  anchor=west,
  text=black,
  rotate=0.0
]{0.02};
\draw (axis cs:3.1,0.035639) node[
  scale=0.5,
  anchor=west,
  text=red,
  rotate=0.0
]{0.06};
\draw (axis cs:4.18,1.95) node[
  scale=0.5,
  anchor=west,
  text=black,
  rotate=0.0
]{1.13};
\draw (axis cs:4.05,2137.337957) node[
  scale=0.5,
  anchor=west,
  text=red,
  rotate=0.0
]{2137.34};
\end{groupplot}

\end{tikzpicture}}
	\caption{
			Top: distribution of \eqref{eq:cost_increase} across 100 episodes with median marked. Bottom: distribution of average MPC solve time for each episode across 100 episodes with median marked. The maximal time of all steps is marked with a red triangle.}
	\label{fig:eval_all}
\end{figure*}
\begin{figure}
	\centering
	\vspace*{-0.75cm}
	\hspace{1.5cm}\subfloat{% This file was created with tikzplotlib v0.10.1.
\begin{tikzpicture}

\newcommand{\myW}{0.6*\axisdefaultwidth}
\newcommand{\myH}{0.8*\axisdefaultheight}

\definecolor{darkgray178}{RGB}{178,178,178}
\definecolor{darkolivegreen7012033}{RGB}{70,120,33}
\definecolor{firebrick166640}{RGB}{166,6,40}
\definecolor{lightgray204}{RGB}{204,204,204}
\definecolor{silver188}{RGB}{188,188,188}
\definecolor{slategray122104166}{RGB}{122,104,166}
\definecolor{steelblue52138189}{RGB}{52,138,189}
\definecolor{whitesmoke238}{RGB}{238,238,238}

\begin{axis}[
width=\myW,
height=\myH,
axis background/.style={fill=none}, % Removes the gray background
axis line style={silver188},
hide x axis,
hide y axis,
legend cell align={left},
legend columns=4,
legend style={ 
  fill opacity=0.8,
  draw opacity=1,
  text opacity=1,
  at={(0.5,0.5)},
  anchor=center,
  draw=none,
  fill=none
},
tick pos=left,
% tick scale binop=\times,
x grid style={darkgray178},
xmajorgrids,
xmin=100, xmax=101,
xtick style={color=black},
xtick={4.6,4.8,5,5.2,5.4},
xticklabels={
  \(\displaystyle {4.6}\),
  \(\displaystyle {4.8}\),
  \(\displaystyle {5.0}\),
  \(\displaystyle {5.2}\),
  \(\displaystyle {5.4}\)
},
y grid style={darkgray178},
ymajorgrids,
ymin=4.725, ymax=5.275,
ytick style={color=black},
ytick={4.5,4.75,5,5.25,5.5},
yticklabels={
  \(\displaystyle {4.50}\),
  \(\displaystyle {4.75}\),
  \(\displaystyle {5.00}\),
  \(\displaystyle {5.25}\),
  \(\displaystyle {5.50}\)
}
]
\addplot [semithick, red, dash pattern=on 5.55pt off 2.4pt]
table {%
5 5
};
\addlegendentry{QM}
\addplot [semithick, blue]
table {%
5 5
};
\addlegendentry{LM}
\addplot [semithick, green, dash pattern=on 9.6pt off 2.4pt on 1.5pt off 2.4pt]
table {%
5 5
};
\addlegendentry{HM}
\addplot [semithick, orange, dash pattern=on 1.5pt off 2.475pt]
table {%
5 5
};
\addlegendentry{NM}

\end{axis}
\end{tikzpicture}}\\
	\vspace*{-1cm}
	\subfloat{\input{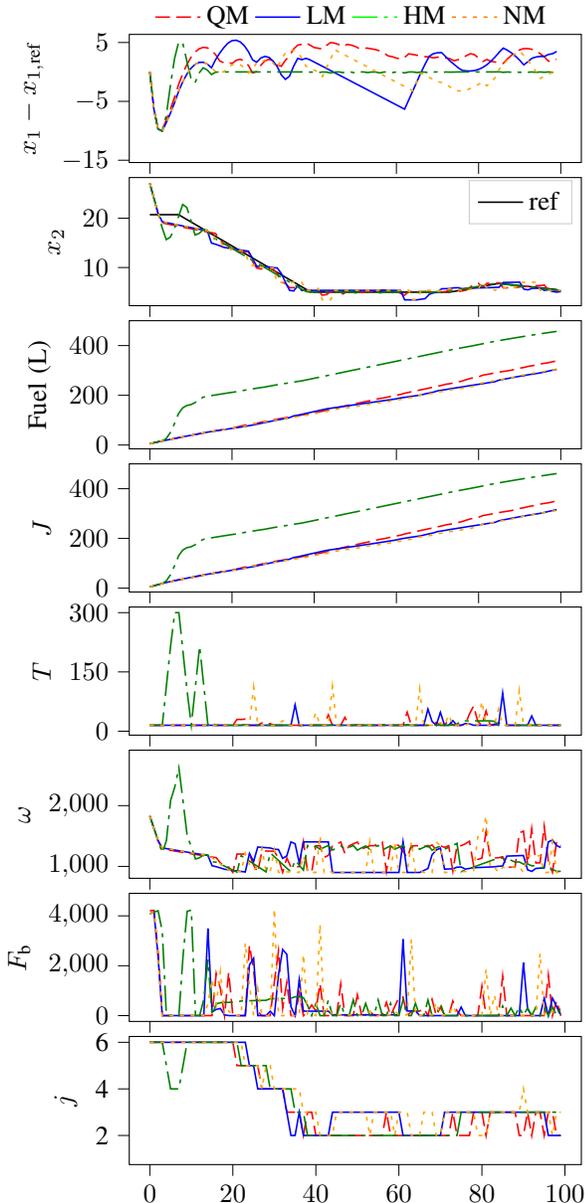}}\\
	\vspace*{-0.25cm}
	\caption{Representative trajectories for each controller.}
	\label{fig:traj}
\end{figure}

It is well known that MPC controllers have a level of inherent robustness thanks to re-optimization at each time step \cite{rawlingsModelPredictiveControl2017}.
To explore the preservation of this robustness in the proposed learning-based MPC controller, a further evaluation is conducted for 100 episodes where a strong disturbance in the form of a time-varying headwind $v_\text{w}(t) \in [8, 14]$ ms$^{-1}$ is present. 
This disturbance is unmodeled for the MPC controllers, but affects the true dynamics of the system.
The headwind changes the relative velocity of the vehicle when calculating wind drag, i.e., the drag term in \eqref{eq:dynamics} becomes $C (v + v_\text{w})^2$.
Figure \ref{fig:eval_wind} shows the cost increase and the solve time over 100 episodes under headwind disturbance.
LM retains a performance that is comparable to that of the mixed-integer approaches, with a superior computational burden.
The backup solution was used for $2.15\%$ of time steps.

Finally, we explore the scaling of the approaches with the prediction horizon $N$, and the generalization of the proposed approach to different horizon lengths.
An evaluation is conducted for 100 episodes (without headwind) with $N = 20$. 
The proposed approach LM uses the policy $\pi_\theta$ trained with $N=15$, i.e., no retraining. %extra data is generated and the policy is not retrained.
Figure \ref{fig:eval_N_20} shows the cost increase and the solve times.
Again, LM retains a comparable performance to the mixed-integer controllers with superior computation time, demonstrating how the learned policy can generalize to horizons longer than that on which it was trained.  
For LM the backup solution was used for $0.43\%$ of time steps.

\section{CONCLUSIONS}\label{sec:conclusions}
In this work we have proposed a novel learning-based MPC controller for fuel efficient autonomous driving.
By learning a policy that selects the gear-shift schedule over the MPC prediction horizon, the benefits of speed and gear co-optimization, i.e., fuel efficient tracking, are retained without the computational burden of solving a mixed-integer program.
The result is a controller that achieves a performance comparable to approaches that solve mixed-integer programs, and that has a computational burden comparable to sub-optimal approaches that decouple speed and gear optimization.
Future work will look at extending the approach to vehicle platoons and addressing model mismatch with robust MPC.

\appendices
\section{Proof of Proposition \ref{prop:1}}\label{proof:prop_1}
\begin{proof}
We prove Proposition \ref{prop:1} by showing the existence of a specific feasible solution to \eqref{eq:NLP}, namely a constant-velocity trajectory.
We show that $J\big(x(k), \textbf{x}_\mathrm{ref}(k), \textbf{j}(k)\big) < \infty$ with $\textbf{j}(k) = \sigma\big(x_2(k)\big)$ for the solution $\textbf{x}(k), \textbf{u}^\prime(k)$ with 
\begin{equation}\label{eq:state_sol}
	x(i|k) = \big[x_1(k) +  \tau x_2(k)\Delta t,  x_2(k)\big]^\top, \: i = 0,\dots,N
\end{equation}
and constant input $u^\prime(i|k) = [u_1, u_2]^\top$ for $i=0,\dots,N-1$, with $T_\text{min} \leq u_1 \leq T_\text{max}$, $ F_\text{b, min} \leq u_2 \leq F_\text{b, max}$ such that
\begin{equation}\label{eq:assumpion_in_prop}
	\begin{aligned}
		\frac{u_1 z\Big(\phi\big(x_2(k)\big)\Big) z_\text{f}}{r} - C x_2^2(k) - u_2 - G = 0.
	\end{aligned}
\end{equation}
By the assumption in Proposition \ref{prop:1} this control input exists for any $x_2(k)$ with $v_\mathrm{min} \leq x_2(k) \leq v_\mathrm{max}$ and for gear $j=\phi\big(x_2(k)\big)$.
The dynamics constraints \eqref{eq:nlp_dynam_cnstr} are satisfied by this solution as \eqref{eq:state_sol} and \eqref{eq:assumpion_in_prop} satisfy the equality constraint, i.e., for $i = 0,\dots,N-1$ we have
\begin{equation}
	f\Big(x(i|k), \big[u_1, u_2, \phi\big(x_2(k)\big)\big]^\top\Big) = x(i+1|k).
\end{equation}
Furthermore, the constraints \eqref{eq:nlp_rate_cnstr} and \eqref{eq:nlp_gear_shift_lim} are trivially satisfied by this solution with constant velocity $x_2(k)$, engine torque $u_1$, and gear $\phi\big(x_2(k)\big)$.
Finally, constraint \eqref{eq:nlp_state_input_cnstr} is satisfied as $T_\text{min} \leq u_1 \leq T_\text{max}$, $F_\text{b, max} \leq u_2 \leq F_\text{b, max}$, and $w_\text{min} \leq \omega\Big(x_2(k), \phi\big(x_2(k)\big)\Big) \leq w_\text{max}$ by the definition of $\phi$.
\begin{color}{black}Hence, all constraints are satisfied, with $\textbf{x}(k), \textbf{u}^\prime(k)$ one of potentially many feasible solutions.
Indeed, this solution could be used as a feasible initial guess for a numerical solver.\end{color}
\end{proof}

% \printbibliography
% \bibliographystyle{IEEEtran}
\bibliographystyle{plain}
\bibliography{root.bib}

\end{document}